\documentclass[a4paper,11pt]{article}
%
%
\usepackage{graphicx}
\usepackage{amsfonts}
\usepackage{amsmath}
\usepackage{hyperref}
%
%
\title{How not to secure wireless sensor networks: A plethora of insecure polynomial-based key pre-distribution schemes}
\author{Chris J. Mitchell\\Information Security Group, Royal Holloway, University of London\\
\url{www.chrismitchell.net}}
\date{5th October 2020 (version 5)}
%
%

\newtheorem{lemma}{Lemma}[section]
\newtheorem{corollary}[lemma]{Corollary}

\newenvironment{proof}[1][Proof]{\begin{trivlist}
\item[\hskip \labelsep {\bfseries #1}]}{\end{trivlist}}
\newcommand{\qed}{\nobreak \ifvmode \relax \else
      \ifdim\lastskip<1.5em \hskip-\lastskip
      \hskip1.5em plus0em minus0.5em \fi \nobreak
      \vrule height0.75em width0.5em depth0.25em\fi}
\parskip = 1ex
\parindent = 0pt

\begin{document}

\maketitle

\section*{Abstract}

Three closely-related polynomial-based group key pre-distribution schemes have recently been
proposed, aimed specifically at wireless sensor networks.  The schemes enable any subset of a
predefined set of sensor nodes to establish a shared secret key without any communications
overhead. It is claimed that these schemes are both secure and lightweight, i.e.\ making them
particularly appropriate for network scenarios where nodes have limited computational and storage
capabilities. Further papers have built on these schemes, e.g.\ to propose secure routing protocols
for wireless sensor networks. Unfortunately, as we show in this paper, all three schemes are
completely insecure; whilst the details of their operation varies, they share common weaknesses. In
two cases we show that an attacker equipped with the information built into just one sensor node
can compute all possible group keys, including those for which the attacked node is not a member;
this breaks a fundamental design objective. In the other case an attacker equipped with the
information built into at most two sensor nodes can compute all possible group keys. In the latter
case the attack can also be achieved by an attacker armed with the information from a single node
together with a single group key to which this sensor node is not entitled. Repairing the schemes
appears difficult, if not impossible. The existence of major flaws is not surprising given the
complete absence of any rigorous proofs of security for the proposed schemes. A further recent
paper proposes a group membership authentication and key establishment scheme based on one of the
three key pre-distribution schemes analysed here; as we demonstrate, this scheme is also insecure,
as the attack we describe on the corresponding pre-distribution scheme enables the authentication
process to be compromised.

\section{Introduction}  \label{section-intro}

In this paper we are concerned with the problem of enabling groups of nodes (such as exist in a
Wireless Sensor Network (WSN)) to establish shared secret keys for use by any subset of the nodes,
using only information distributed to the nodes in advance by a trusted \emph{Key Generation Centre
(KGC)}. In particular we examine three separate (albeit closely related) protocols described as
being designed for wireless sensor networks, although they have no particular features restricting
their operation to this use case. Indeed, they would work just as effectively for any case where a
number of hardware tokens, such as RFID tags or smart cards, are to be distributed by a single
trusted party.  The schemes we consider are designed to ensure that nodes which are part of the
system, but not part of a selected group, cannot access the key for that group, i.e.\ nodes can
only access keys for groups of which they are a part.

All three of the schemes we consider are polynomial-based.  They appear to be inspired by the work
of Blundo et al.\ \cite{Blundo98}; however, unlike this prior art, the schemes all have major
flaws. Unfortunately, many other polynomial-based group key distribution schemes have been shown to
be flawed, \cite{Liu17,Mitchell18,Mitchell19a,Mitchell19,Mitchell20}, meaning that the approach
needs to be used with care. In any event, there are many existing schemes which achieve the same
objectives in an efficient way and which have rigorous proofs of security --- see, for example,
Boyd et al.\ \cite{Boyd20}, and, of course, the previous work of Blundo et al.\ \cite{Blundo98}.

We also briefly examine a group membership and key establishment scheme recently proposed by Cheng
et al.\ \cite{Cheng20a}.  This scheme builds directly on the first of the schemes we cryptanalyse
in this paper, and we show how the cryptanalysis extends to this case.

The remainder of this paper is structured as follows.  Some preliminary remarks are made in
\S\ref{section-prelims}. The 2015 Harn-Hsu protocol is described in \S\ref{section-harn-hsu},
together with an attack against this scheme; the fact that this attack also works on the derived
2020 Cheng-Hsu-Xia-Harn scheme is also briefly discussed. The very closely-related 2015 Harn-Gong
scheme is briefly considered in \S\ref{section-harn-gong}. A more recent scheme, the 2019
Albakri-Harn protocol, is described and analysed in \S\ref{section-albakri}.  Other issues are
discussed in \S\ref{section-other}, and brief conclusions are provided in
\S\ref{section-conclusions}.

\section{A preliminary observation}  \label{section-prelims}

All three of the schemes we describe involve arithmetic computed in the ring of integers
$\mathbb{Z}_N$, where $N$ is an `RSA modulus' \cite{Rivest78}, i.e.\ $N=pq$ for two large prime
numbers $p$ and $q$. This means that addition and multiplication are computed modulo $N$.  For
today's RSA applications, $p$ and $q$ are typically chosen to be of the order of $2^{1024}$, and we
make this assumption throughout. This ensures that with today's techniques and computing resources,
factoring $N$ is infeasible (and this is necessary to ensure that the RSA cryptosystem is secure).

If $p$ and $q$ are of this size, then the probability that a random integer $s$ ($1\leq s<n$) is
coprime to $N=pq$ is very close to 1.  To see this, observe that $\phi(n)$, the Euler phi function
that gives the number of positive integers less than $n$ that are coprime to $n$ (Menezes et al.\
\cite{Menezes97}, Definition 2.100), satisfies $\phi(pq)=(p-1)(q-1)$ for any primes $p$, $q$
(\cite{Menezes97}, Fact 2.101).  Thus the probability that a random positive integer less than
$N=pq$ (where $p$ and $q$ are of the order of $2^{1024}$) is coprime to $N$ is
\[ \phi(N)/N = (p-1)(q-1)/pq \approx (2^{2048}-2^{1025})/2^{2048} = 1-2^{-1023}. \]
Hence we assume throughout that all randomly chosen integers are coprime to $N$ and so have
multiplicative inverses modulo $N$; moreover, such inverses can readily be computed
(\cite{Menezes97}, Algorithm 2.142). That is, we can readily compute divisions modulo $N$, a fact
that underlies all the attacks we describe.

\section{The Harn-Hsu scheme}  \label{section-harn-hsu}

In 2015 Harn and Hsu \cite{Harn15} proposed a group key pre-distribution protocol for WSNs.
Unfortunately, as we describe in detail below, the Harn-Hsu scheme is insecure. In particular,
anyone with the shares belonging to one node can compute the keys for all possible groups,
regardless of membership. The nature of the flaws giving rise to the attack are rather fundamental,
meaning that it is difficult to imagine how the scheme could be rescued.

\subsection{The protocol}  \label{subsection-harn-hsu-operation}

We next describe the Harn-Hsu protocol \cite{Harn15}.  The protocol involves a set of $\ell$ end
nodes $\mathcal{S}=\{S_1,S_2,\ldots,S_\ell\}$ and a Key Generation Centre (KGC).  Note that, in
this description and in the description of related protocols (in \S\ref{section-harn-gong} and
\S\ref{section-albakri}), the terminology and notation have been changed slightly from that used in
the source papers to ensure consistency.  As with all such protocols, the Harn-Hsu protocol has two
phases, as follows.

\begin{itemize}
\item \emph{Share generation}, a pre-distribution phase in which the KGC generates a set of
    $\ell-1$ shares $\{s_{i,1}(x), s_{i,2}(x), \ldots, s_{i,\ell-1}(x)\}$ for each node $S_i
    \in \mathcal{S}$, $1\leq i\leq\ell$, where each share $s_{i,j}(x)$ is a polynomial. It is
    assumed that the KGC distributes the shares to the nodes using a secure channel, e.g.\ at
    the time of manufacture/personalisation, and that each node stores its shares securely.
\item \emph{Group key establishment}, where all members of a subset of nodes independently
    generate a \emph{group key} for use by nodes in that subset. Note that, if repeated with
    the same subset of nodes, the key establishment process will generate the same key, but a
    different key is generated for each node subset.
\end{itemize}
We next describe the operation of these two phases in greater detail.

\subsubsection{Share generation}

The KGC first chooses an `RSA modulus' $N$, i.e.\ $N=pq$ where $p$ and $q$ are two primes chosen to
be sufficiently large to make factoring $N$ infeasible.  It is also assumed that each node $S_i$
has a unique identifer $\mbox{ID}_i$, $1\leq i\leq\ell$, where $1\leq \mbox{ID}_i< N$.  It is
(presumably) the case that the KGC keeps $p$ and $q$ secret, although, as we see below, this does
not appear to make a significant difference to the security of the scheme.

The KGC next chooses a polynomial $f$ over $\mathbb{Z}_N$ of degree $k$ for some pre-chosen $k$.
We suppose that the polynomial coefficients are chosen uniformly at random from $\mathbb{Z}_N$. No
explicit guidance is given on the choice of $k$, but later it is claimed that the scheme is secure
if up to $k$ nodes are captured and their secrets are revealed.  We assume throughout that
$k\geq2$.

For every $S_i$ ($1\leq i\leq\ell$) the KGC calculates a set of $\ell-1$ shares in the following
way.
\begin{itemize}
\item The KGC first computes $f(\mbox{ID}_i)$.
\item The KGC then chooses at random $\ell-2$ values $u_{i,1},u_{i,2},\ldots,u_{i,\ell-2}$,
    where $0<u_{i,j}<N$ for every $j$ ($1\leq j\leq \ell-2$).
\item The KGC then computes $u_{i,\ell-1}$ as
\[ u_{i,\ell-1}= \frac{f(\mbox{ID}_i)}{\prod_{j=1}^{\ell-2}u_{i,j}} \bmod N \]
i.e.
\[ \prod_{j=1}^{\ell-1}u_{i,j} \equiv f(\mbox{ID}_i) \pmod N. \]
\item Finally, the shares for $S_i$ are computed as $s_{i,j}(x)=u_{i,j}f(x) \bmod N$, $1\leq
    j\leq \ell-1$, where each share is a polynomial over $\mathbb{Z}_N$ of degree $k$.
\end{itemize}

Finally, the KGC equips each node $S_i\in\mathcal{S}$ with the following:
\begin{itemize}
\item the shares $\{s_{i,1}(x), s_{i,2}(x), \ldots, s_{i,\ell-1}(x)\}$;
\item the values of $N$ and the node's own identifier $\mbox{ID}_i$;
\item the identifiers $\mbox{ID}_j$ of all of the other nodes.
\end{itemize}

\subsubsection{Group key establishment}  \label{subsubsection-harn-hsu-gke}

Suppose some subset $\mathcal{S}'\subseteq \mathcal{S}$ of sensors ($|\mathcal{S}'|\geq 2$) wish to
share a group key, where $|\mathcal{S}'|=h$ ($2\leq h\leq\ell$). The group key
$K_{\mathcal{S}'}\in\mathbb{Z}_N$ for $\mathcal{S}'$ is defined to be:
\[ K_{\mathcal{S}'} = f(0)^{\ell-h}\prod_{S_j\in\mathcal{S}'}f(\mbox{ID}_j) \bmod
N. \]

Any node $S_i\in \mathcal{S}'$ can compute $K_{\mathcal{S}'}$ using the shares $\{s_{i,1}(x),
s_{i,2}(x), \ldots, s_{i,\ell-1}(x)\}$ in the following way.
\begin{itemize}
\item Suppose that
    \[\mathcal{S}'-\{S_i\}=\{S_{r_1},S_{r_2},\ldots,S_{r_{h-1}}\}\]
    and that
    \[\mathcal{S}-\mathcal{S}'-\{S_i\}=\{S_{r_h},S_{r_{h+1}},\ldots,S_{r'_{\ell-1}}\}\]
\item Then set
\[ K_{\mathcal{S}'} = \prod_{j=1}^{h-1}s_{i,j}(\mbox{ID}_{r_j}) \prod_{j=h}^{\ell-1}s_{i,j}(0). \]
\end{itemize}

However, a node $S_k\not\in\mathcal{S}'$ cannot use its shares to compute $K_{\mathcal{S}'}$, at
least not in the way described above.

Finally note that it should be clear that the group key $K_{\mathcal{S}}$ for the set of all
sensors is simply
\[ K_{\mathcal{S}} = \prod_{j=1}^{\ell}f(\mbox{ID}_j) \bmod
N. \]

\subsection{Critical vulnerabilities}  \label{subsection-harn-hsu-attack}

We describe below a simple insider attack which enables the discovery of all the group keys. This
can be achieved using the shares held by any one of the nodes.  The attack works regardless of the
choice of the polynomial degree $k$.  We also show that, even without access to a set of shares,
knowledge of some keys enables others to be deduced.

\subsubsection{Some simple deductions}

We start with a simple but key observation where here, and in the remainder of
\S\ref{section-harn-hsu}, we suppose that the values $z_1,z_2,\ldots,z_\ell$ satisfy
\[ f(\mbox{ID}_r) \equiv z_rf(0) \pmod N \]
and $0\leq z_r<N$ for every $r$ ($1\leq r\leq\ell$).

\begin{lemma}  \label{harn-hsu-lemma0}
If $s_{i,j}(x)$ is a share for user $S_i$ ($1\leq i\leq\ell, 1\leq j\leq \ell-1$), then, for any
$r$ ($1\leq r\leq\ell$):
\[ z_r = s_{i,j}(\mbox{ID}_r)/s_{i,j}(0) \bmod N. \]
\end{lemma}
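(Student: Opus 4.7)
The plan is to exploit the multiplicative structure of the shares directly: since each share is defined as $s_{i,j}(x) = u_{i,j} f(x) \bmod N$, the node-specific masking factor $u_{i,j}$ does not depend on $x$ and therefore cancels the moment we take a ratio of two evaluations of the same share.

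First I would substitute $x = \mbox{ID}_r$ and $x = 0$ into the definition of $s_{i,j}(x)$, giving $s_{i,j}(\mbox{ID}_r) \equiv u_{i,j} f(\mbox{ID}_r) \pmod N$ and $s_{i,j}(0) \equiv u_{i,j} f(0) \pmod N$. Next, using the defining relation $f(\mbox{ID}_r) \equiv z_r f(0) \pmod N$ from the hypothesis on the $z_r$, the first expression becomes $u_{i,j} z_r f(0) \bmod N$. Dividing the two expressions then yields $s_{i,j}(\mbox{ID}_r)/s_{i,j}(0) \equiv z_r \pmod N$, which is exactly the claimed identity.

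The only subtlety is ensuring that the division modulo $N$ is well defined, i.e.\ that $s_{i,j}(0) = u_{i,j} f(0) \bmod N$ is invertible in $\mathbb{Z}_N$. This is where I would appeal to the preliminary observation in \S\ref{section-prelims}: since $u_{i,j}$ and the coefficients of $f$ (in particular the constant term $f(0)$) are chosen uniformly at random from $\mathbb{Z}_N$, each is coprime to $N$ with overwhelming probability, and hence so is their product. Consequently $s_{i,j}(0)$ has a multiplicative inverse modulo $N$, and that inverse can in fact be computed efficiently by the extended Euclidean algorithm, so the right-hand side is genuinely computable by anyone holding the share $s_{i,j}(x)$.

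I do not expect any real obstacle here; the lemma is essentially a one-line algebraic identity, and the main content is simply flagging the invertibility condition so that the subsequent attacks (which will repeatedly use such ratios to strip off the random masks $u_{i,j}$) rest on a firm footing.
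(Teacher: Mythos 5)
Your proof is correct and follows essentially the same route as the paper's: substitute the definition $s_{i,j}(x)=u_{i,j}f(x) \bmod N$, cancel the factor $u_{i,j}$ in the ratio, and invoke the defining relation $f(\mbox{ID}_r)\equiv z_r f(0) \pmod N$. Your explicit appeal to the invertibility discussion of \S\ref{section-prelims} is a slight elaboration the paper leaves implicit, but it is the same argument.
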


\begin{proof}
By definition, we immediately have:
\begin{eqnarray*}
s_{i,j}(\mbox{ID}_r)/s_{i,j}(0) & \equiv &
u_{i,j}f(\mbox{ID}_r)/u_{i,j}f(0) \pmod N\\
& = & f(\mbox{ID}_r)/f(0).
\end{eqnarray*}

The result follows immediately from the definition of $z_r$.\qed
\end{proof}

We next have a related result.

\begin{lemma}  \label{harn-hsu-lemma1}
If $\mathcal{S}'\subseteq \mathcal{S}$ is some non-empty subset of the nodes, then the key
$K_{\mathcal{S}'}$ for the group $\mathcal{S}'$ satisfies:
\[ K_{\mathcal{S}'} = f(0)^\ell\prod_{S_j\in\mathcal{S}'}z_j  \bmod N. \]
\end{lemma}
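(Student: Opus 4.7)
The plan is to work directly from the definition of $K_{\mathcal{S}'}$ given in \S\ref{subsubsection-harn-hsu-gke} and substitute the relation $f(\mbox{ID}_r)\equiv z_r f(0)\pmod N$, which is the defining property of the $z_r$ values introduced immediately before Lemma~\ref{harn-hsu-lemma0}. No use of Lemma~\ref{harn-hsu-lemma0} itself is needed here; in fact this lemma is the simpler statement of essentially the same relation expressed in terms of shares rather than $f$.

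First I would recall that, by definition,
\[ K_{\mathcal{S}'} = f(0)^{\ell-h}\prod_{S_j\in\mathcal{S}'}f(\mbox{ID}_j) \bmod N, \]
where $h=|\mathcal{S}'|$. Next I would replace each factor $f(\mbox{ID}_j)$ in the product by $z_jf(0)\bmod N$, using the defining relation for the $z_r$. Since the product is taken over the $h$ members of $\mathcal{S}'$, this contributes a factor $f(0)^h$ alongside $\prod_{S_j\in\mathcal{S}'}z_j$. Combining with the pre-existing factor $f(0)^{\ell-h}$ and simplifying exponents gives $f(0)^\ell\prod_{S_j\in\mathcal{S}'}z_j\bmod N$, as required.

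There is no real obstacle here: the argument is a one-line algebraic manipulation modulo $N$, with the only subtle point being that arithmetic takes place in $\mathbb{Z}_N$ rather than a field. However, this causes no difficulty because the relation $f(\mbox{ID}_r)\equiv z_rf(0)\pmod N$ is an identity in the ring and the substitution is purely multiplicative, so nothing depends on the invertibility discussed in \S\ref{section-prelims}. The proof should fit in two or three displayed lines.
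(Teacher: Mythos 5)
Your proposal is correct and follows exactly the same route as the paper's own proof: start from the definition $K_{\mathcal{S}'} = f(0)^{\ell-h}\prod_{S_j\in\mathcal{S}'}f(\mbox{ID}_j) \bmod N$, substitute $f(\mbox{ID}_j)\equiv z_jf(0)\pmod N$, and collect the $h$ extra factors of $f(0)$ with $f(0)^{\ell-h}$ to obtain $f(0)^\ell$. Your added remark that no invertibility is needed here is accurate and a mild bonus over the paper's write-up.
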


\begin{proof}
By definition (and using the notation established in \S\ref{subsubsection-harn-hsu-gke}) we have
\[ K_{\mathcal{S}'} = f(0)^{\ell-h}\prod_{S_j\in\mathcal{S}'}f(\mbox{ID}_j)  \bmod
N \] where $h=|\mathcal{S}'|$. But we assumed that $f(\mbox{ID}_r) \equiv z_rf(0) \pmod N$, and hence
\[ K_{\mathcal{S}'} = f(0)^{\ell-h}\prod_{S_j\in\mathcal{S}'}z_jf(0)  \bmod
N \] and the result follows by re-arranging the products. \qed
\end{proof}

This immediately gives the following.

\begin{corollary}  \label{harn-hsu-corollary1}
The group key $K_{\mathcal{S}}$ shared by all nodes satisfies
\[K_{\mathcal{S}} = f(0)^\ell \prod_{j=1}^{\ell}z_j \bmod N. \]
\end{corollary}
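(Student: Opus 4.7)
The plan is essentially a one-line specialisation of Lemma~\ref{harn-hsu-lemma1}. I would simply instantiate that lemma with $\mathcal{S}' = \mathcal{S}$, which is clearly a non-empty subset of $\mathcal{S}$ (assuming $\ell \geq 1$), so the hypotheses of the lemma are satisfied.

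With this choice, the product $\prod_{S_j \in \mathcal{S}'} z_j$ becomes $\prod_{j=1}^{\ell} z_j$, matching the right-hand side of the corollary. Since the exponent of $f(0)$ in the statement of Lemma~\ref{harn-hsu-lemma1} is already $\ell$ (independent of $h = |\mathcal{S}'|$), no further simplification is needed; the formula asserted by the corollary falls out directly.

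There is essentially no obstacle here: the only thing to verify is the trivial observation that the product indexed by $S_j \in \mathcal{S}$ is the same as the product indexed by $j = 1, 2, \ldots, \ell$, which is immediate from the definition $\mathcal{S} = \{S_1, S_2, \ldots, S_\ell\}$ given in \S\ref{subsection-harn-hsu-operation}. The resulting proof can therefore be written in a single sentence, and could even be omitted entirely, but stating it explicitly is useful because this fully-inclusive case is what an attacker would target first when trying to recover a ``master'' quantity from which other group keys might be derived.
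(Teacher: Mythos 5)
Your proposal is correct and matches the paper exactly: the paper presents this corollary as an immediate consequence of Lemma~\ref{harn-hsu-lemma1}, obtained precisely by taking $\mathcal{S}'=\mathcal{S}$ so that $\prod_{S_j\in\mathcal{S}'}z_j$ becomes $\prod_{j=1}^{\ell}z_j$ while the exponent of $f(0)$ is already $\ell$. Nothing further is needed.
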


We also have the following.

\begin{corollary}  \label{harn-hsu-corollary2}
Suppose $\mathcal{S}_1=\mathcal{S}_2\cup\{S_i\}$ for some $S_i$ ($1\leq i\leq \ell$).  Then
\[ K_{\mathcal{S}_1}/K_{\mathcal{S}_2} \equiv z_i \pmod N. \]
\end{corollary}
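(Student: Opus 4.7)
The plan is to reduce the corollary directly to Lemma~\ref{harn-hsu-lemma1}, applying that lemma to both $\mathcal{S}_1$ and $\mathcal{S}_2$ and then dividing one expression by the other. Since $\mathcal{S}_1 = \mathcal{S}_2 \cup \{S_i\}$, the product $\prod_{S_j \in \mathcal{S}_1} z_j$ differs from $\prod_{S_j \in \mathcal{S}_2} z_j$ by a single extra factor of $z_i$, and the common factor $f(0)^\ell$ cancels cleanly. So the body of the proof should be little more than stating those two instances of Lemma~\ref{harn-hsu-lemma1} and performing the cancellation.

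First I would write
\[
K_{\mathcal{S}_1} \equiv f(0)^\ell \prod_{S_j \in \mathcal{S}_1} z_j \pmod N
\quad\text{and}\quad
K_{\mathcal{S}_2} \equiv f(0)^\ell \prod_{S_j \in \mathcal{S}_2} z_j \pmod N,
\]
using Lemma~\ref{harn-hsu-lemma1} in both cases. Then I would substitute $\mathcal{S}_1 = \mathcal{S}_2 \cup \{S_i\}$ into the first product (implicitly assuming $S_i \notin \mathcal{S}_2$, which is the only interesting case), yielding $\prod_{S_j \in \mathcal{S}_1} z_j = z_i \prod_{S_j \in \mathcal{S}_2} z_j$, and divide the two congruences.

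The only subtlety is that the division must be well-defined modulo $N$, i.e.\ that $f(0)^\ell \prod_{S_j \in \mathcal{S}_2} z_j$ is invertible modulo $N$. This is not really an obstacle, since \S\ref{section-prelims} already establishes that random elements of $\mathbb{Z}_N$ are coprime to $N$ with overwhelming probability; I would either invoke that blanket assumption or note in one line that $f(0)$ and each $z_j$ arise from random coefficients and so are invertible modulo $N$ except with negligible probability. After that, the identity $K_{\mathcal{S}_1}/K_{\mathcal{S}_2} \equiv z_i \pmod N$ drops out immediately, and the proof is complete.
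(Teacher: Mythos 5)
Your proof is correct and is exactly the argument the paper intends: the corollary is stated there without proof as an immediate consequence of Lemma~\ref{harn-hsu-lemma1}, obtained by applying that lemma to both $\mathcal{S}_1$ and $\mathcal{S}_2$ and dividing (compare the paper's proof of Lemma~\ref{lemma2}, which does the same thing for the Albakri--Harn scheme). Your remarks on $S_i\notin\mathcal{S}_2$ and on invertibility modulo $N$ are sensible, and the latter is covered by the blanket assumption of \S\ref{section-prelims}.
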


\subsubsection{Completing the insider attack}

First observe that, using Lemma~\ref{harn-hsu-lemma0}, anyone possessing the shares belonging to a
single sensor node can learn the values of $z_r$, $1\leq r\leq \ell$, for every $r$.

Next observe that, since a single set of shares enables recovery of the group key $K_{\mathcal{S}}$
shared by all nodes, the attacker can obtain $f(0)^\ell \bmod N$ since it follows from
Corollary~\ref{harn-hsu-corollary1} that:
\[ f(0)^\ell = \frac{K_{\mathcal{S}}}{\prod_{j=1}^{\ell}z_j} \bmod N. \]

Given knowledge of $f(0)^\ell \bmod N$ together with the complete set of values
$z_1,z_2,\ldots,z_\ell$, Lemma~\ref{harn-hsu-lemma1} enables the computation of any group key
$K_{\mathcal{S}'}$.  This completes the attack.

\subsubsection{An outsider attack}

Even if an attacker does not possess any of the shares, then attacks are still possible if the
attacker has access to some of the keys $K_{\mathcal{S}'}$ generated using the system.  We give a
simple example of an attack, but many other variants are possible.

Suppose an attack has access to three group keys: $K_{\mathcal{S}_1}$, $K_{\mathcal{S}_2}$ and
$K_{\mathcal{S}_3}$, for groups $\mathcal{S}_1$, $\mathcal{S}_2$ and $\mathcal{S}_3$, where the
attacker also knows the membership of these groups.  Suppose also (to make the discussion simpler)
that $\mathcal{S}_1 = \mathcal{S}_2\cup\{S_y\}$ for some node $S_y$.  Then, by
Corollary~\ref{harn-hsu-corollary2}, the attacker can immediately compute $z_y$.

If $S_y\in\mathcal{S}_3$, then the attacker can now compute
$K_{\mathcal{S}_4}=K_{\mathcal{S}_3}/z_r \bmod N$, where $\mathcal{S}_4=\mathcal{S}_3-\{S_y\}$.
That is, the attacker can compute another valid group key.

This attack works because there are simple algebraic relationships between the keys.

\subsection{Breaking the Cheng-Hsu-Xia-Harn scheme}

As noted in the introduction, Cheng et al.\ \cite{Cheng20a} recently proposed a group membership
authentication and key establishment scheme; as noted by the authors the `proposed protocol is
built upon a recent paper by Harn and Hsu', i.e.\ precisely the paper we have analysed immediately
above.

The Cheng-Hsu-Xia-Harn scheme involves a trusted KGC generating and distributing shares to all
members of a group, exactly as in the Harn-Hsu scheme.  When a subset of the group wishes to
conduct group authentication, each subset member generates a shared key using precisely the method
described in the Harn-Hsu paper --- they then use this key to authenticate to the members of the
subset. However, as we have shown, every `subset key' is actually computable by every shareholder,
regardless of whether they are in the authorised subset or not; this means that the authentication
scheme can be trivially broken.

\section{The Harn-Gong scheme}  \label{section-harn-gong}

In 2015, the same year in which the Harn-Hsu paper appeared, Harn and Gong \cite{Harn15a} presented
another group key pre-distribution scheme.  It would appear that this paper was actually submitted
a few months earlier than the Harn-Hsu paper.  However, we consider it after Harn-Hsu as the scheme
it describes is essentially just a special case of the Harn-Hsu scheme.

Like the Harn-Hsu scheme, a KGC distributes shares to each of a set of $\ell$ nodes.  The KGC first
chooses an RSA modulus $N$ and a polynomial $f(x)$ over $\mathbb{Z}_N$ of degree $k$ for some
pre-chosen $k$.  The single share sent to each user $S_i$ is computed as $s_i(x)=u_if(x) \bmod N$.
where $u_i$ satisfies
\[ (u_i)^{\ell-1} \equiv f(\mbox{ID}_i) \pmod N. \]

Using the notation of \S\ref{section-harn-hsu}, it should be clear that this is simply a special
case of the Harn-Hsu scheme where the set of $\ell-1$ shares for a user are all chosen to be equal,
i.e.\ where $u_{i,1}=u_{i,2}=\ldots=u_{i,\ell-1}$ and hence
$s_{i,1}(x)=s_{i,2}(x)=\ldots=s_{i,\ell-1}(x)$.  All other aspects of the scheme are identical.

Since it is just a special case of the Harn-Hsu scheme, precisely the same attacks work, and hence
we do not consider the scheme further here.

\section{The Albakri-Harn scheme}  \label{section-albakri}

Four years after the Harn-Hsu paper appeared, in 2019 Albakri and Harn \cite{Albakri19} proposed
yet another group key pre-distribution protocol for WSNs.  Three variants of the protocol are
described, a \emph{basic scheme}, which has a heavy storage overhead, and two derived schemes which
use the same underlying idea but reduce the storage requirement for individual nodes.  Given its
fundamental role, we focus here on the basic scheme.

At first sight the scheme is somewhat different to the two schemes we have examined so far.
However, more careful analysis reveals that it is again very closely related.  Not surprisingly,
the Albakri-Harn scheme is also insecure.  In particular, if two nodes collude, or if one node
gains access to a single key to which it is not entitled, then all keys for all groups, regardless
of membership, can be computed.

\subsection{The protocol}  \label{subsection-albakri-operation}

We describe the \emph{Basic scheme}, \cite{Albakri19}.  The protocol involves a set of $\ell$
sensor nodes $\mathcal{S}=\{S_1,S_2,\ldots,S_\ell\}$ and the KGC.  The protocol has two phases.
\begin{itemize}
\item \emph{Token generation}, a pre-distribution phase in which the KGC generates a token
    $T_i$ for each node $S_i \in \mathcal{S}$, $1\leq i\leq\ell$.  As previously, it is assumed
    that the KGC distributes the tokens to the nodes using a secure channel, e.g.\ at the time
    of manufacture/personalisation.
\item \emph{Group key establishment}, where all members of a subset of nodes independently
    generate a \emph{group key} for use by nodes in that subset.
\end{itemize}
We next describe the operation of these two phases in greater detail.

\subsubsection{Token generation}

Again as before, the KGC first chooses an `RSA modulus' $N$, i.e.\ $N=pq$ where $p$ and $q$ are two
primes chosen to be sufficiently large to make factoring $N$ infeasible.  It is also assumed that
each node $S_i$ has a unique identifer $\mbox{ID}_i$, $1\leq i\leq\ell$, where $1\leq \mbox{ID}_i<
N$.

The KGC next chooses a set of $\ell$ univariate polynomials $\mathcal{F}=\{f_1,f_2,\ldots,f_\ell\}$
over $\mathbb{Z}_N$, each of degree $t-1$ for some $t$.  In the absence of further information, we
suppose here the coefficients of these polynomials are chosen uniformly at random from
$\mathbb{Z}_N$. No explicit guidance is given on the choice of $t$, but later it is claimed that
the scheme is secure if up to $t-1$ nodes are captured (and presumably their secrets are revealed),
and that the choice of $t$ is a trade-off between the computational complexity of key establishment
and the security of the scheme. We assume throughout that $t\geq2$, since if $t=1$ all polynomials
are of degree zero and all tokens (and group keys) are identical.

For every $i$ ($1\leq i\leq\ell$) the KGC calculates the token $T_i$ as the following polynomial in
$\ell-1$ variables:
\[ T_i = f_i(\mbox{ID}_i)\prod_{\substack{j=1\\j\neq i}}^{\ell}f_j(x_j) \bmod N. \]

Finally, the KGC equips each node $S_i\in\mathcal{S}$ with the following:
\begin{itemize}
\item the token $T_i$;
\item the values of $N$ and the node's own identifier $\mbox{ID}_i$;
\item the identifiers $\mbox{ID}_j$ of all of the other nodes.
\end{itemize}

It is important to note that some of the steps above are based on the author's interpretation of
the Albakri-Harn paper, \cite{Albakri19}, as many details are left unclear.

\subsubsection{Group key establishment}

Suppose some subset $\mathcal{S}'\subseteq \mathcal{S}$ of sensors ($|\mathcal{S}'|\geq 2$) wish to
share a group key. The group key $K_{\mathcal{S}'}\in\mathbb{Z}_N$ for $\mathcal{S}'$ is defined to
be:
\[ K_{\mathcal{S}'} = \prod_{S_j\in\mathcal{S}'}f_j(\mbox{ID}_j) \prod_{S_v\in\mathcal{S}-\mathcal{S}'}f_v(0) \bmod
N. \]

Any of the nodes $S_i\in \mathcal{S}'$ can compute $K_{\mathcal{S}'}$ by evaluating the token $T_i
\bmod N$ for a particular choice of the indeterminates $x_j$, namely by setting

\[ x_j= \left\{ \begin{array}{ll}
        \mbox{ID}_j & \mbox{if $S_j\in\mathcal{S}'$} \\
        0 & \mbox{otherwise.}
        \end{array}
        \right. \]

However, a node $S_k\not\in\mathcal{S}'$ cannot use its token $T_k$ to compute $K_{\mathcal{S}'}$,
at least not in the way described above.

Finally note that it should be clear that the group key $K_{\mathcal{S}}$ for the set of all
sensors is simply
\[ K_{\mathcal{S}} = \prod_{j=1}^{\ell}f_j(\mbox{ID}_j) \bmod
N. \]

\subsubsection{An observation}

It was stated above that this scheme is closely related to the Harn-Hsu scheme.  This is not
immediately apparent, as the Harn-Hsu scheme involves a node being given a set of shares each of
which is a univariate polynomial, and in the Albakri-Harn scheme a sensor node is given a token
consisting of a single multivariate polynomial.  However, this token is actually analogous to the
product of the Harn-Hsu shares, in the case where the share polynomials have distinct
indeterminates (although the polynomials $f_i$ in Albakri-Harn are all distinct). Armed with this
insight, the protocol then works in an essentially identical way to Harn-Hsu. It is therefore not
surprising that, as we discuss below, closely analogous attacks apply.

\subsection{An attack}  \label{subsection-albakri-attack}

\subsubsection{Stage 1: Partial polynomial recovery}  \label{subsubsection-albakri-stage1}

In the discussions below we need notation for the coefficients of the polynomials in $\mathcal{F}$,
and hence, recalling that all these polynomials have degree $t-1$, we suppose that
\[ f_i(x) = \sum_{j=0}^{t-1} f_{ij}x^j \]
for every $i$, $1\leq i\leq\ell$.

We first consider what a single node $S_i$ can learn about the polynomials in $\mathcal{F}$ from a
single token $T_i$.  It follows immediately from the definition that $T_i$ consists of the sum of
all terms of the form
\[ f_i(\mbox{ID}_i)\prod_{\substack{j=1\\j\neq i}}^{\ell}f_{jk_j}x_j^{k_j} \bmod N, \]
where $0\leq k_j\leq t-1$ for every $j$.  That is, $S_i$ will know the value of
\[ f_i(\mbox{ID}_i)\prod_{\substack{j=1\\j\neq i}}^{\ell}f_{jk_j} \bmod N \]
for every pair $(j,k_j)$, where $1\leq j\leq\ell$ ($j\neq i$) and $0\leq k_j\leq t-1$.

Given these observations, for any $r$ ($1\leq r\leq\ell$, $r\neq i$) and any $s$ ($1\leq s$), and
using the known coefficients of the polynomial $T_i$, it is simple to compute the ratio of two such
coefficients, namely
\[ \frac {f_i(\mbox{ID}_i)\prod_{\substack{j=1\\j\neq i}}^{\ell}f_{jk_j}}
         {f_i(\mbox{ID}_i)\prod_{\substack{j=1\\j\neq i}}^{\ell}f_{jk'_j}}
         \bmod N \]
where $k_j=k'_j$ for every $j$ except for $j=r$ and $k_j=s$, where we put $k'_r=0$.  In this case
all the identical terms will cancel, and the above expression will equal
\[ \frac {f_{rs}}
         {f_{r0}}
         \bmod N. \]
That is, for every polynomial $f_r$ ($r\neq i$), the ratio of each coefficient in $f_r$ to the
constant coefficient $f_{r0}$ can be computed.  Hence $S_i$ can readily discover the values
$w_1,w_2,\ldots,w_{t-1}$ where
\[ f_{rj}=w_jf_{r0}. \]
Moreover, given $\mbox{ID}_r$ (which is known to all nodes), $S_i$ can use the above to deduce that
\[ f_r(\mbox{ID}_r) = z_rf_{r0} \]
for some $z_r$ known to $S_i$, for every $r\neq i$.

\subsubsection{Stage 2: Pair-wise collusion to complete the attack}  \label{subsubsection-albakri-stage2}

We next describe how, if two nodes collude, e.g.\ by sharing their tokens, they can completely
break the system; that is they will have the means to readily compute every possible group key,
including for groups excluding both of them.  That is, as is the case for all three schemes
examined, the system is completely insecure if just two nodes collude, regardless of the choice of
$t$.

We first need the following simple result (analogous to Lemma~\ref{harn-hsu-lemma1}), which uses
the notation of \S\ref{subsection-albakri-operation}.

\begin{lemma}  \label{lemma1}
Suppose the values $z_1,z_2,\ldots,z_\ell$ satisfy
\[ f_r(\mbox{ID}_r) \equiv z_rf_{r0} \pmod N \]
for every $r$ ($1\leq r\leq\ell$). If $\mathcal{S}'\subseteq \mathcal{S}$ is some non-empty subset
of the nodes, the key $K_{\mathcal{S}'}$ for the group $\mathcal{S}'$ satisfies:
\[ K_{\mathcal{S}'} = \prod_{S_j\in\mathcal{S}'}z_j \prod_{v=1}^{\ell}f_{v0} \bmod
N. \]
\end{lemma}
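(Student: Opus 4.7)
The plan is to mirror the proof of Lemma~\ref{harn-hsu-lemma1}: start from the definition of $K_{\mathcal{S}'}$ given in \S\ref{subsection-albakri-operation}, substitute the hypothesis wherever an $f_r(\mbox{ID}_r)$ appears, observe that $f_v(0)$ is simply the constant coefficient $f_{v0}$, and then rearrange the product.

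Concretely, I would split the defining product into its two pieces, namely the ``in-group'' factors $f_j(\mbox{ID}_j)$ for $S_j\in\mathcal{S}'$ and the ``out-of-group'' factors $f_v(0)$ for $S_v\in\mathcal{S}-\mathcal{S}'$. To the first set I would apply the hypothesis $f_j(\mbox{ID}_j)\equiv z_j f_{j0}\pmod N$, separating out the scalar factor $z_j$ from the polynomial constant term $f_{j0}$. To the second set I would apply the identity $f_v(0)=f_{v0}$, which follows directly from the expansion $f_i(x)=\sum_{j=0}^{t-1}f_{ij}x^j$ recorded at the start of \S\ref{subsubsection-albakri-stage1}. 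What then remains is a product of $z_j$ over $S_j\in\mathcal{S}'$, a product of $f_{j0}$ over $S_j\in\mathcal{S}'$, and a product of $f_{v0}$ over $S_v\in\mathcal{S}-\mathcal{S}'$; combining the last two gives the single product $\prod_{v=1}^{\ell}f_{v0}$, yielding the claimed expression.

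There is no genuine obstacle: the statement is a direct algebraic rewrite, entirely analogous to Lemma~\ref{harn-hsu-lemma1}. The only point worth a brief remark is that the argument uses no property of the polynomial degree $t$ and nothing about $\mathbb{Z}_N$ beyond well-definedness of products, so the conclusion holds for \emph{every} non-empty $\mathcal{S}'$ and every $t\geq 2$; this matters because the attack in \S\ref{subsubsection-albakri-stage2} will apply the lemma to arbitrary subsets without further restriction.
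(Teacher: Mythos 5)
Your proposal is correct and follows exactly the same route as the paper's own proof: substitute the hypothesis $f_j(\mbox{ID}_j)\equiv z_jf_{j0}\pmod N$ into the defining product, use $f_v(0)=f_{v0}$, and rearrange the factors into the claimed form. Nothing is missing.
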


\begin{proof}
By definition we have
\[ K_{\mathcal{S}'} = \prod_{S_j\in\mathcal{S}'}f_j(\mbox{ID}_j) \prod_{S_v\in\mathcal{S}-\mathcal{S}'}f_v(0) \bmod
N. \]

We assumed that $f_r(\mbox{ID}_r) = z_rf_{r0} \bmod N$ for every $r$, and it trivially holds that
$f_r(0)=f_{r0}$ for every $r$.  Hence
\[ K_{\mathcal{S}'} = \prod_{S_j\in\mathcal{S}'}z_jf_{j0} \prod_{S_v\in\mathcal{S}-\mathcal{S}'}f_{v0} \bmod
N. \] The result now follows by re-arranging the products. \qed
\end{proof}

The following corollary (analogous to Corollary~\ref{harn-hsu-corollary1}) is immediate.

\begin{corollary}  \label{corollary1}
The group key $K_{\mathcal{S}}$ shared by all nodes satisfies
\[K_{\mathcal{S}} = \prod_{j=1}^{\ell}z_j \prod_{v=1}^{\ell}f_{v0} \bmod N. \]
\end{corollary}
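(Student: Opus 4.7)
The plan is to derive this statement as a direct specialization of Lemma~\ref{lemma1}, which has just been proved. Corollary~\ref{corollary1} concerns $K_{\mathcal{S}}$, the key shared by the entire node set, so I simply instantiate Lemma~\ref{lemma1} with the choice $\mathcal{S}' = \mathcal{S}$.

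First I would note that $\mathcal{S}$ is a non-empty subset of itself (we have $\ell \geq 2$ throughout), so Lemma~\ref{lemma1} applies. Setting $\mathcal{S}' = \mathcal{S}$, the first product $\prod_{S_j\in\mathcal{S}'}z_j$ specializes to $\prod_{j=1}^{\ell}z_j$, while the second product $\prod_{v=1}^{\ell}f_{v0}$ (which already ranges over \emph{all} indices, not just those outside $\mathcal{S}'$) is unchanged. This yields precisely the claimed formula, so the corollary follows in one line.

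No real obstacle is expected; the statement is essentially a notational rewrite of Lemma~\ref{lemma1} in the degenerate case $\mathcal{S} - \mathcal{S}' = \emptyset$. As an optional consistency check one could compare with the direct formula $K_{\mathcal{S}} = \prod_{j=1}^{\ell}f_j(\mbox{ID}_j) \bmod N$ recorded at the end of \S\ref{subsection-albakri-operation}: substituting the hypothesis $f_j(\mbox{ID}_j) \equiv z_j f_{j0} \pmod N$ for each $j$ and separating the resulting product into its $z$-factors and $f_{\cdot 0}$-factors reproduces the corollary from first principles, which serves as a useful sanity check that nothing has been lost in the specialization.
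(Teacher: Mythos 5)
Your proposal is correct and matches the paper's approach: the paper states the corollary is ``immediate'' from Lemma~\ref{lemma1}, and your instantiation $\mathcal{S}'=\mathcal{S}$ is exactly that one-line specialization. The optional sanity check against the direct formula for $K_{\mathcal{S}}$ is a nice extra but not needed.
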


We now observe that, from the arguments in \S\ref{subsubsection-albakri-stage1}, if any two users
collude then they can learn the complete set of values $z_1,z_2,\ldots,z_\ell$ satisfying
$f_r(\mbox{ID}_r) = z_rf_{r0} \bmod N$, $1\leq r\leq\ell$.

Next observe that, since they can both readily compute the group key $K_{\mathcal{S}}$ shared by
all nodes, they can obtain $\prod_{v=1}^{\ell}f_{v0} \bmod N$ since it follows from
Corollary~\ref{corollary1} that:
\[ \prod_{v=1}^{\ell}f_{v0} = \frac{K_{\mathcal{S}}}{\prod_{j=1}^{\ell}z_j} \bmod N. \]

Armed with knowledge of $\prod_{v=1}^{\ell}f_{v0} \bmod N$ together with the complete set of values
$z_1,z_2,\ldots,z_\ell$, the colluding nodes can use Lemma~\ref{lemma1} to immediately compute any
group key $K_{\mathcal{S}'}$, regardless of whether either of the colluding nodes are members of
the group $\mathcal{S}'$.  This completes the attack.

\subsubsection{An alternative to collusion}  \label{subsection-alternative}

Just as previously, even in this case it remains possible to completely break the system if a node
(by some means) learns a single key for a group of which the node is not a member.  Suppose node
$S_i$ knows the key $K_{\mathcal{S}'}$ for the group $\mathcal{S}'$, where $S_i\not\in
\mathcal{S}'$.  By Lemma~\ref{lemma1}, $S_i$ knows that:
\[ K_{\mathcal{S}'} = \prod_{S_j\in\mathcal{S}'}z_j \prod_{v=1}^{\ell}f_{v0} \bmod
N. \]

But $S_i$ knows all the values $z_j$ for $S_j\in\mathcal{S}'$, since $S_i\not\in \mathcal{S}'$.
Hence $S_i$ can compute $\prod_{v=1}^{\ell}f_{v0} \bmod N$ as
\[ \prod_{v=1}^{\ell}f_{v0} = \frac{K_{\mathcal{S}'}}{\prod_{S_j\in\mathcal{S}'}z_j} \bmod N. \]

Knowledge of $\prod_{v=1}^{\ell}f_{v0} \bmod N$ together with the (almost) complete set of values
$z_1,z_2,\ldots,z_\ell$ (except for $z_i$), enables $S_i$ to now compute any group key
$K_{\mathcal{S}'}$ for any group $\mathcal{S}'$ for which $S_i\not\in \mathcal{S}'$.  This
completes the attack.

\subsection{A simplified attack}  \label{subsection-albakri-simpler}

Using the same notation as before, we now present an even simpler version of the attack approaches
described in \S\ref{subsection-albakri-attack}.  This attack does not seek to recover information
about the polynomials in $\mathcal{F}$, but instead recovers just sufficient information to be able
to recover all group keys.  We use the notation established in \S\ref{subsection-albakri-attack}.

\subsubsection{Learning a set of ratios}

We start with another simple observation, analogous to Corollary~\ref{harn-hsu-corollary2}.

\begin{lemma}  \label{lemma2}
Suppose $\mathcal{S}_i=\mathcal{S}-S_i$.  Then
\[ K_{\mathcal{S}}/K_{\mathcal{S}_i} = z_i \bmod N \]
for every $i$, $1\leq i\leq \ell$.
\end{lemma}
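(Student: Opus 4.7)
The plan is to derive the identity directly from Lemma~\ref{lemma1}, which already expresses every group key in the very convenient factored form $K_{\mathcal{S}'} = \prod_{S_j\in\mathcal{S}'} z_j \prod_{v=1}^{\ell} f_{v0} \bmod N$. Since the only difference between $\mathcal{S}$ and $\mathcal{S}_i$ is the presence or absence of the single node $S_i$, the two products will differ only by the factor $z_i$, and everything else will cancel cleanly. This makes the proof essentially a one-line calculation rather than a structural argument.

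Concretely, I would first write out $K_{\mathcal{S}} = \prod_{j=1}^{\ell} z_j \prod_{v=1}^{\ell} f_{v0} \bmod N$ by applying Lemma~\ref{lemma1} with $\mathcal{S}' = \mathcal{S}$ (which also matches Corollary~\ref{corollary1}). Then I would apply Lemma~\ref{lemma1} again with $\mathcal{S}' = \mathcal{S}_i = \mathcal{S} - \{S_i\}$ to get $K_{\mathcal{S}_i} = \prod_{j\neq i} z_j \prod_{v=1}^{\ell} f_{v0} \bmod N$. Dividing the two expressions modulo $N$ (which is legitimate by the preliminary observation in \S\ref{section-prelims}, guaranteeing that the denominator is almost certainly a unit in $\mathbb{Z}_N$), every $f_{v0}$ factor cancels and all but one $z_j$ cancels, leaving $K_{\mathcal{S}}/K_{\mathcal{S}_i} \equiv z_i \pmod N$, as required.

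Alternatively, and perhaps even more transparently, I could bypass Lemma~\ref{lemma1} entirely and go straight from the definition of $K_{\mathcal{S}'}$: write $K_{\mathcal{S}} = \prod_j f_j(\mbox{ID}_j)$, write $K_{\mathcal{S}_i} = \prod_{j\neq i} f_j(\mbox{ID}_j) \cdot f_i(0) = \prod_{j\neq i} f_j(\mbox{ID}_j) \cdot f_{i0}$, and observe that the quotient is $f_i(\mbox{ID}_i)/f_{i0}$, which equals $z_i$ by the defining relation $f_i(\mbox{ID}_i) \equiv z_i f_{i0} \pmod N$.

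There isn't really a hard step here; the only thing that warrants a moment's attention is justifying the division, but this is handled uniformly by the coprimality argument of \S\ref{section-prelims}. I would choose the first approach, since it keeps the presentation consistent with the surrounding derivations and emphasises that Lemma~\ref{lemma2} is just another immediate consequence of the factorisation already established in Lemma~\ref{lemma1}.
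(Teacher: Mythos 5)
Your first approach is exactly the paper's own proof: apply Lemma~\ref{lemma1} (equivalently Corollary~\ref{corollary1}) to $\mathcal{S}$ and to $\mathcal{S}_i$ and divide, with the division justified by the coprimality observation of \S\ref{section-prelims}. The proposal is correct and matches the paper's argument.
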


\begin{proof}
By Lemma~\ref{lemma1},
\[ K_{\mathcal{S}_i} = \prod_{\substack{j=1\\j\neq i}}^{\ell}z_j\prod_{v=1}^{\ell}f_{v0} \bmod N.\]
Similarly, by Corollary~\ref{corollary1},
\[ K_{\mathcal{S}} = \prod_{j=1}^{\ell}z_j \prod_{v=1}^{\ell}f_{v0} \bmod N. \]
The result follows immediately. \qed
\end{proof}

From Lemma~\ref{lemma2}, it follows that $S_i$ can learn the values of $z_r$, $1\leq r\leq \ell$,
for every $r\neq i$.

\subsubsection{Completing the attack}

Armed with the values of $z_r$, $1\leq r\leq \ell$, for every $r\neq i$, the attacks described in
\S\ref{subsubsection-albakri-stage2} and \S\ref{subsection-alternative} now work exactly as
previously described.

\section{Other observations}  \label{section-other}

\subsection{But what about the security analyses?}

On the face of it, the above analyses are rather surprising in the light of apparently robust
claims made in the Harn-Hsu \cite{Harn15}, Harn-Gong \cite{Harn15a}, and Albakri-Harn
\cite{Albakri19} papers. In all three papers there are apparently `theorems' proving the security
of the respective schemes.  However, in every case closer examination reveals that the `proofs' are
not in any way rigorous arguments.

For example, Theorem 1 of Albakri and Harn \cite{Albakri19} states that `The adversaries cannot
obtain any information of secret polynomials selected by KGC'. The `secret polynomials' referred to
here are the polynomials in $\mathcal{F}$, and the set of possible adversaries includes parties
with knowledge of one or more tokens $T_i$. However, we have shown in
\S\ref{subsection-albakri-attack} that knowledge of a single token is sufficient to learn all but
one of the polynomials in $\mathcal{F}$ up to a constant term. How can this be, given Theorem 1?
Examination of the `proof' of Theorem 1 reveals that it is by no means a rigorous proof --- it is
more a series of unsubstantiated claims.  For example, the proof starts with the following
statement `\emph{Capturing one sensor} --- It is obvious that by capturing any single sensor $S_i$,
and obtaining the token $T_i$, the adversary cannot recover information of any individual
polynomial $f_i$, nor the product of all individual polynomials'. That is, the `proof' of the main
claim seems to amount to a statement that it is `obvious'. Sadly, the claimed result is clearly not
as obvious as the authors hoped.

More seriously, this highlights the need for newly proposed cryptographic protocols to be provided
with robust and rigorous evidence of their security.  Indeed, this has been the state of the art
for a couple of decades, as has been very widely discussed --- see, for example, Boyd et al.
\cite{Boyd20}.

\subsection{Three almost identical schemes}

Quite apart from the lack of security, it is unfortunate that, given all three papers we considered
(\cite{Albakri19,Harn15a,Harn15}) share one author, that three such similar schemes have been
published separately.  Moreover, the authors make no attempt to explain the close relationships
between the three schemes.

\subsection{An application}

To make matters worse, some authors have sought to build broader security schemes on top of one of
the schemes considered here.  For example, Harn et al.\ \cite{Harn16} describe a secure routing
protocol for WSNs, of which the Harn-Hsu scheme \cite{Harn15} forms an integral part; indeed, for
some reason the authors have chosen to describe the Harn-Hsu scheme again in some detail.  This, of
course, means that the routing protocol, regardless of its design, is inherently insecure.  It
would, of course, have been good design practice to describe the routing protocol in terms of a
`black box' technique for key pre-distribution, and then to mention possible candidates for this
black box, since there is no inherent reason to couple the two techniques. This would have avoided
the main problem.

\section{Summary and conclusions}  \label{section-conclusions}

As we have demonstrated, the Harn-Hsu, Harn-Gong and Albakri-Harn schemes all possess fundamental
flaws (as does the derived Cheng-Hsu-Xia-Harn scheme). Given the nature of these flaws, it is
difficult to imagine how the schemes could be rescued. Indeed, there is no reason to believe that a
secure scheme can be designed using the underlying approach adopted in all three cases. As observed
in \S\ref{section-intro}, many related polynomial-based group key distribution schemes have been
shown to be flawed, \cite{Liu17,Mitchell18,Mitchell19a,Mitchell19,Mitchell20}. Again as observed
above, there are many existing schemes which achieve the same objectives in an efficient way and
which have rigorous proofs of security --- see, for example, Boyd et al.\ \cite{Boyd20} and Blundo
et al.\ \cite{Blundo98}.

Fundamentally, the fact that the authors have not provided rigorous proofs of security for the
various schemes means that attacks such as those described here remain possible.  It would have
been more prudent to follow established wisdom and only publish a scheme of this type if a rigorous
security proof had been established.  Similar remarks apply to the all-too-often misconceived
attempts to fix broken schemes, unless a proof of security can be devised for a revised scheme.
Achieving this seems very unlikely for variants of the schemes we have examined.


\end{document}